\newcommand{\setN}{\mathbb{N}}
\newcommand{\pre}[1]{{}^\bullet{#1}}
\newcommand{\post}[1]{{#1}^\bullet}
\newcommand{\restr}[1]{|_{#1}}
\newcommand{\gt}[1]{\xrightarrow{#1}}
\newcommand{\rset}[1]{[#1\rangle}
\newcommand{\prior}{_{\textsc{pr}}}
\theoremstyle{plain}
\newtheorem*{theorem*}{Theorem}
\title{A concise proof of Commoner's theorem}
\author{Petr Jan\v{c}ar\\
{\small Dept of Computer Science, Faculty of Science, Palack\'y University in
Olomouc, Czechia}\\
\small{petr.jancar@upol.cz}}
\begin{document}

\date{}
\maketitle

\begin{abstract}
\noindent
The textbook proofs of Commoner's theorem characterizing liveness in
	free-choice Petri nets
are given in contexts of technical notions and claims that make the
	proofs look a~bit long. The aim of this note is to give a concise
	self-contained proof.
\end{abstract}	

This note is a slight elaboration of the (non-reviewed) $1$-page text
``Petr Jan\v{c}ar, A concise proof of {C}ommoner's theorem'' in 
Petri Net Newsletter, No 49, page 43 (October 1995), 
which referred to~\cite{DeselEsparza1995,Reisig1985}.
The aim here is to make that text easily accessible, by putting a~more
readable elaboration on arXiv (with no ambition to fit everything on
one page).

We first recall basic notions and notation, also making some easy
observations that are useful in the proof of the theorem that follows.

A \emph{net} $N$ is a triple $(P,T,F)$ where $P$ and $T$ are finite
disjoint sets of \emph{places} and \emph{transitions}, respectively,
and $F\subseteq(P\times T)\cup(T\times P)$ is a \emph{flow relation}.
For $x\in P\cup T$ and $X\subseteq P\cup T$ we put $\pre{x}=\{y\mid
(y,x)\in F\}$, $\pre{X}=\bigcup_{x\in X}\pre{x}$, and 
 $\post{x}=\{y\mid
(x,y)\in F\}$, $\post{X}=\bigcup_{x\in X}\post{x}$. 
A set $Q\subseteq P$ is a \emph{trap} if $\post{Q}\subseteq\pre{Q}$;
a set $S\subseteq P$ is a \emph{siphon} if  $\pre{S}\subseteq\post{S}$.

A~\emph{marking} $M$ of $N$ is a finite multiset of places, hence
$M\in\setN^P$; by $\mathbf{0}$ we denote the empty multiset.
For $R\subseteq P$, $M\restr{R}$ denotes the restriction of $M$ to
$R$.
For $t\in T$ we have $M\gt{t}M'$ if $t$ is \emph{enabled at} $M$, 
i.e.\ $\pre{t}\subseteq
M$, and $M'=M-\pre{t}+\post{t}$. For $\sigma\in T^*$ we define 
$M\gt{\sigma}M'$ inductively: $M\gt{\varepsilon}M$ (where $\varepsilon$
is the empty word), and if $M\gt{t}M'$
and
$M'\gt{\sigma}M''$ then $M\gt{t\sigma}M''$; we say that $\sigma$ is
\emph{enabled at} $M$ if $M\gt{\sigma}M'$ for some $M'$. We put
$\rset{M}=\{M'\mid M\gt{\sigma}M'$ for some $\sigma\in T^*\}$.
Let us observe that if $Q$ is a~trap and $M\restr{Q}\neq\mathbf{0}$, then 
$M'\restr{Q}\neq\mathbf{0}$ for all $M'\in\rset{M}$;
if $S\subseteq P$ is a siphon and $M\restr{S}=\mathbf{0}$ then 
 $M'\restr{S}=\mathbf{0}$ for all $M'\in\rset{M}$.

A~\emph{transition} $t$ is \emph{dead at} $M$ if $\forall
M'\in\rset{M}:\pre{t}\not\subseteq M'$ ($t$ is not enabled at
$M'$); by $D_M$ we denote the set of transitions that are dead at $M$.
A~\emph{transition} $t$ is \emph{live at} $M$ if $\forall
M'\in\rset{M}:t\not\in D_{M'}$ (hence $\forall
M'\in\rset{M}\, \exists M''\in\rset{M'}:t$ is enabled at $M''$); by $L_M$ we denote the set of
transitions that are live at $M$. We observe that $M'\in\rset{M}$
entails $D_M\subseteq D_{M'}$ and $L_M\subseteq L_{M'}$; moreover,
for any $M$ there is
$M'\in\rset{M}$ such that $D_{M'}\cup L_{M'}=T$. 

A net $N=(P,T,F)$
is a~\emph{free-choice net} if 
$\pre{t}\cap\pre{t'}\neq\emptyset$ entails 
$\pre{t}=\pre{t'}$ (for all $t,t'\in T$).
Hence the free-choice property guarantees that at each $M$ 
all input-sharing transitions $t,t'$ are 
either both enabled or both disabled.
Here we also observe that 
if $t\in D_M$, then
$\post{(\pre{t})}\subseteq D_M$; 
moreover, 
there is $p\in\pre{t}$ such that
 $M(p)=0$ and $p\not\in \post{(L_M)}$.

\begin{theorem*}[Commoner]
For any free-choice net $N=(P,T,F)$ with no isolated places
	(i.e., $P=\pre{T}\cup\post{T}$)
	and any $M_0\in\setN^P$, the next
two conditions are equivalent:
	\begin{enumerate}[a)]
		\item
			all transitions are live at $M_0$
			(i.e., $L_{M_0}=T$);
		\item
		for every nonempty siphon $S\subseteq P$ 
there is a trap $Q\subseteq S$ such that
			$M_0\restr{Q}\neq \mathbf{0}$.
	\end{enumerate}
	\end{theorem*}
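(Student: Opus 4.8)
The plan is to prove both implications through the behavioural notions of dead and live transitions, using two standing observations that translate between behaviour and structure. First, since a union of traps is again a trap, every $S$ has a largest trap $Q^*\subseteq S$, and ``no trap $Q\subseteq S$ is marked at $M_0$'' is equivalent to $M_0\restr{Q^*}=\mathbf 0$. Second, a nonempty $S$ with no isolated places forces $\post{S}\neq\emptyset$, and once a reachable marking empties $S$ (i.e.\ $M'\restr{S}=\mathbf 0$) the siphon stays empty, so any $t\in\post{S}$ is dead at $M'$ and hence not live at $M_0$. Thus condition (a) fails as soon as some reachable marking empties some nonempty siphon, and conversely a dead transition is the seed from which a ``bad'' siphon can be grown.

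For (b)$\Rightarrow$(a) I would argue contrapositively, and this is the step that uses free choice. Assuming $L_{M_0}\neq T$, some transition is dead at a reachable marking, so by the observation that one can reach $M'\in\rset{M_0}$ with $D_{M'}\cup L_{M'}=T$ I fix such an $M'$ with $D_{M'}\neq\emptyset$. I then set $S=\{p\mid M'(p)=0 \text{ and } \pre{p}\cap L_{M'}=\emptyset\}$, the empty places fed by no live transition. The free-choice observation (each $t\in D_{M'}$ has some $p\in\pre{t}$ with $M'(p)=0$ and $p\notin\post{(L_{M'})}$) immediately gives $S\neq\emptyset$; and if $t\in\pre{S}$ then $t$ feeds a place of $S$, so $t\notin L_{M'}$, hence $t\in D_{M'}$, so the same observation yields $q\in\pre{t}\cap S$, whence $t\in\post{S}$ --- i.e.\ $S$ is a siphon. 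Since $M'\restr{S}=\mathbf 0$ and traps stay marked, any trap $Q\subseteq S$ marked at $M_0$ would be marked at $M'$, which is impossible; hence $S$ witnesses the failure of (b).

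For (a)$\Rightarrow$(b) I would argue by contradiction, and here free choice is not needed. Suppose $L_{M_0}=T$ but some nonempty siphon $S$ has no marked trap at $M_0$, i.e.\ $M_0\restr{Q^*}=\mathbf 0$ for its largest trap $Q^*$. Liveness guarantees that no reachable marking ever empties $S$ (emptying would kill a $t\in\post{S}$, as above), so the contradiction must come from \emph{nevertheless} driving $S$ empty. The key point is that $S\setminus Q^*$ contains no nonempty trap at all (a trap inside it would lie in $Q^*$), so its currently marked subset is never a trap and therefore always has an output transition that removes a token from it without returning one; liveness lets me actually enable and fire such transitions, and since $Q^*$ is unmarked at $M_0$ I keep it unmarked while lowering the content of $S$ step by step until $S$ is empty --- contradicting liveness.

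The routine direction is (b)$\Rightarrow$(a): once the set $S$ above is written down, the verifications are short and mechanical. The main obstacle is the emptying argument in (a)$\Rightarrow$(b): one must show that a marked, trap-free set of places can really be flushed, i.e.\ that liveness supplies firing sequences enabling the draining transitions while neither re-marking $Q^*$ nor refilling $S$ faster than it drains. I expect to handle this by choosing, among all reachable markings that keep $Q^*$ empty, one that minimises the number of tokens in $S$, and then using freedom-from-traps of $S\setminus Q^*$ together with liveness to exhibit a firing sequence lowering that minimum, forcing it to be $0$.
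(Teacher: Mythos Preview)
Your treatment of the contrapositive of (b)$\Rightarrow$(a) is essentially the paper's: you take a reachable marking $M'$ with $D_{M'}\cup L_{M'}=T$ and build an unmarked siphon from the free-choice observation. (The paper picks one witness place per dead transition; you take all such places. Both work.)

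The genuine gap is in your (a)$\Rightarrow$(b) sketch, and it is exactly the step you flag as ``the main obstacle'' but then dismiss as not needing free choice. Pick your minimising $M$ with $M\restr{Q^*}=\mathbf 0$, let $R\subseteq S\smallsetminus Q^*$ be the marked places, and take $t\in\post R\smallsetminus\pre R$. Liveness lets you reach some $M''$ enabling $t$, but the firing sequence from $M$ to $M''$ is completely uncontrolled: it may pass through $\post S$, and hence through $\pre S$, and in particular through $\pre{Q^*}$, marking $Q^*$ (a trap, so now permanently marked) and knocking $M''$ out of the set you minimised over; and even if $Q^*$ stays empty, $t\notin\pre R$ says nothing about $\post t\cap(S\smallsetminus R)$, so firing $t$ can put tokens right back into $S$. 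Your minimality argument therefore does not close. This is not a mere technicality: the implication (a)$\Rightarrow$(b) is known to fail outside the free-choice class, so any argument that never invokes free choice cannot be correct.

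The paper's route for this direction is structurally different from yours. It does \emph{not} assume liveness and then drain $S$; it proves non-(b)$\Rightarrow$non-(a) directly. From the chain $S=R_1\supsetneq\cdots\supsetneq R_k\supsetneq Q$ it fixes ``priority'' transitions $t_i\in\post{R_i}\smallsetminus\pre{R_i}$ with $\pre{t_i}\cap Q=\emptyset$, and fires only sequences from $\bigl((T\smallsetminus\post S)^*\cdot\{t_1,\dots,t_k\}\bigr)^*$ until stuck at some $M$; this keeps $Q$ empty and terminates by a token-count bound. The free-choice property is used only at the very end: if any $t\in\post S$ were enabled at some $M'$ reachable from $M$ via $T\smallsetminus\post S$, then since $M'\restr{Q}=\mathbf 0$ we would have $\pre t\cap(S\smallsetminus Q)\neq\emptyset$, hence $\pre t\cap\pre{t_i}\neq\emptyset$ for some $i$, and free choice forces $t_i$ to be enabled too---contradicting stuckness. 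That single appeal to free choice is precisely what your draining argument is missing.
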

\begin{proof}
We fix some assumed $N=(P,T,F)$ and $M_0$, and show the following
 implications.
\medskip

	1. non-a)$\Rightarrow$non-b).
\\
We assume that $L_{M_0}\neq T$; we then choose $M\in\rset{M_0}$ such that
	$D_M\neq\emptyset$ and $D_M\cup L_M=T$. For every $t\in D_M$
	we fix one $p_t\in\pre{t}$ such that $M(p_t)=0$ and $p_t\not\in
	\post{(L_M)}$
	(which is possible by the respective above observation).
Hence $S=\{p_t\mid t\in D_M\}$ is a~nonempty
	siphon ($\pre{S}\subseteq D_M\subseteq\post{S}$)
	such that $M\restr{S}=\mathbf{0}$; hence $S$ cannot
	contain a trap $Q$ such that $M_0\restr{Q}\neq\mathbf{0}$ 
	(otherwise we would have $M\restr{Q}\neq\mathbf{0}$ since
	$M\in\rset{M_0}$).

\medskip

	2. non-b)$\Rightarrow$non-a).

	Let $S$ be a nonempty siphon (hence $S\neq\emptyset$ and
	$\pre{S}\subseteq\post{S}$) and $Q\subseteq S$ be the maximal trap,
	i.e.\ the union
	of all traps, inside $S$ (hence $\post{Q}\subseteq\pre{Q}$,
	and we can have $Q=\emptyset$);
	let $M_0\restr{Q}=\mathbf{0}$.
Since there are no isolated places, we must have
	$\post{S}\neq\emptyset$.
 We will finish the proof by showing that
	$\post{S}\cap L_{M_0}=\emptyset$.
	This is obvious if $Q=S$, in which case $\post{S}\subseteq
	D_{M_0}$, so we assume $Q\subsetneq S$. 

First we observe that for any $R$ such that $Q\subsetneq R\subseteq
	S$, which entails that $R$ is not a~trap,
	there is $t$ such that $\pre{t}\cap
	R\neq\emptyset$ and
			$\post{t}\subseteq P\smallsetminus R$
	(i.e., $t\in\post{R}\smallsetminus\pre{R}$),
	which also entails that  $\pre{t}\cap Q=\emptyset$
	($t\not\in\pre{Q}$ entails
	$t\not\in\post{Q}$).
For some $k\leq |S\smallsetminus Q|$ we can thus fix a sequence
	$t_1,t_2,\cdots,t_k$ 
of transitions from $\post{S}$ 
and a sequence $S=R_1\supsetneq R_2\cdots\supsetneq
	R_k\supsetneq Q$
where for $i=1,2,\dots,k$ we have 
$R_i=S\smallsetminus\pre{\{t_1,t_2,\dots,t_{i-1}\}}$,
		$\pre{t_i}\cap R_{i}\neq\emptyset$, and
	$\post{t}_i\subseteq P\smallsetminus R_i$ (which also entails
	that $\pre{t_i}\cap Q=\emptyset$), and, moreover,
$S\smallsetminus\pre{\{t_1,t_2,\dots,t_{k}\}}=Q$.
	By $T\prior$ we denote the set
	$\{t_1,t_2,\dots,t_k\}$ of these fixed (priority) transitions.

	Let us choose $\sigma\in\left((T\smallsetminus
	\post{S})^*\cdot T\prior\right)^*$ and $M$ so that 
	$M_0\gt{\sigma}M$ and there is no $\sigma'\in (T\smallsetminus
	\post{S})^*\cdot T\prior$ enabled at $M$.
	Since the transitions from $T\smallsetminus\post{S}$ do not
	affect the marking on $S$ (recall that
	$\pre{S}\subseteq\post{S}$), the definition of $T\prior$
	guarantees that there must be such $\sigma$ and $M$.
(The transition $t_i\in T\prior$ can occur in $\sigma$ no more than $m_i$
times, where
$m_k=\min\{M_0(p)\mid p\in R_k\smallsetminus Q\}$, and
$m_{i}=\sum_{i<j\leq k}m_j +\min\{M_0(p)\mid p\in R_i\smallsetminus
R_{i+1}\}$ for $i=k{-}1,k{-}2,\dots,1$.) We also note that
$M\restr{Q}=\mathbf{0}$ (since $M_0\restr{Q}=\mathbf{0}$ and we have 
$t\not\in \pre{Q}$ for all $t\in T\prior\cup (T\smallsetminus \post{S})$).

We claim that $\post{S}\subseteq D_{M}$, which shows that our goal
		$\post{S}\cap L_{M_0}=\emptyset$ is satisfied.
For the sake of contradiction, we assume $M\gt{\sigma'}M'$ where 
$\sigma'\in(T\smallsetminus\post{S})^*$ and some $t\in \post{S}$ is enabled
at $M'$. Since $M'\restr{Q}=\mathbf{0}$, we have
$\pre{t}\cap(S\smallsetminus Q)\neq\emptyset$; hence
$\pre{t}\cap\pre{t_i}\neq\emptyset$
for some $t_i\in T\prior$.
The free-choice property thus
guarantees that $t_i$ is enabled at $M'$ as well, which contradicts
our choice of $M$. 
\end{proof}

\bibliographystyle{splncs04}
\bibliography{bibliography}

\end{document}